\newtheorem{thm}{Theorem}[section]
\newtheorem{lem}[thm]{Lemma}
\theoremstyle{definition}
\theoremstyle{remark}
\title{Optimal Estimating Equation for Logistic Regression with Linked Data}
\author{%
J\MakeLowercase{enkin} T\MakeLowercase{sui}\textsuperscript{1}, A\MakeLowercase{bel} D\MakeLowercase{asylva}, K\MakeLowercase{enneth} C\MakeLowercase{hu}\\
\textbf{}{\\}
\textsuperscript{1}D\MakeLowercase{epartment} o\MakeLowercase{f} C\MakeLowercase{omputer} a\MakeLowercase{nd} M\MakeLowercase{athematical} S\MakeLowercase{ciences}, U\MakeLowercase{niversity} o\MakeLowercase{f} T\MakeLowercase{oronto} S\MakeLowercase{carborough}, T\MakeLowercase{oronto}, O\MakeLowercase{ntario}, M1C 1A4, C\MakeLowercase{anada}
}
\date{JULY 22, 2017}
\begin{document}

\begin{abstract}

We propose an optimal estimating equation for logistic regression with linked data while accounting for false positives.
 It builds on a previous solution but estimates the regression coefficients with a smaller large sample variance.\\

\noindent KEY WORDS AND PHRASES. maximum likelihood, quasi-likelihood, asymptotic properties, estimating equations, logistic regression, record linkage, linkage errors

\end{abstract}

\maketitle

\tableofcontents

\section{Introduction}

{\it Record Linkage}, also known as data linkage, is the act of bringing together records from two files, say file $X$ and file $Y$, that relate to the same individual or entity.
 A record-pair is {\it matched} when the two records belong to the same individual.
 Otherwise it is {\it unmatched}.
 A record is {\it unlinked} if it has no link to any outgoing record in the other file.
 A link between two unmatched records is called a {\it false positive}.
 The absence of a link between two matched records is called a {\it false negative}.
 Note that it is common to have false positives and unlinked records for a variety of reasons such as: human entry errors,
 change of formats overtime, nonunique values, missing values, etc.

 There are many record-linkage methods.
 In the {\it probabilistic} method two records are linked according to the probability that they are matched given their comparison outcomes \cite{fellegi_sunter_1969}.
 This is particularly useful when the linkage information is limited to {\it quasi-identifiers} (e.g. the first name, demographic variables, the postal code, etc.) instead of unique identifiers.

 This paper looks at logistic regression with linked data, while accounting for false positives.
 It improves a previous estimator by Chipperfield et al. \cite{chipperfield_bishop_campbell_2011}.


\section{Background}

Linkage errors include false positives and false negatives.
There are obvious parallels with traditional surveys.
Indeed false positives are analogous to measurement errors, while false negatives and unlinked records are analogous to item nonresponse.
All these errors are potential sources of bias in regressions with linked data.
The problem of regression under linkage errors has been previously discussed by Neter et al. \cite{ neter_maynes_ramanathan_1965},  Scheuren and Winkler \cite{ scheuren_winkler_1993}, and Krewski et al. \cite{ krewski_dewanji_2001}.
Various solutions have been described depending on the available linkage information, the linkage scenario and the regression model.
In a primary analysis, all details of the linkage project are available to the analyst.
For a probabilistic linkage, this information include the linkage weight of each record-pair.
In this case, the study may also include all the selected record-pairs, regardless of whether they have been linked, as suggested by Scheuren and Winkler \cite{ scheuren_winkler_1993}.
At Statistics Canada, such studies are conducted by analysts within the agency.
In a secondary analysis \cite{ kim_chambers_2016}, the analyst has access to limited information about the linkage project, e.g. the overall rates of false positives and false negatives.
A secondary analyst may be an academic researcher at a research data center.
As mentioned before, the linkage scenario is another important factor.
It describes whether each record in each file is matched to exactly one record (one-to-one and onto), at most one record (one-to-one), or many records (one-to-many or many-to-many).
Finally, the actual solution also depends on the regression model.
Previous studies cover linear regression, logistic regression, contingency tables, mortality studies, as well as capture-recapture models. 
Scheuren and Winkler \cite{ scheuren_winkler_1993, scheuren_winkler_1997} have developed a bias-correction solution for linear regression by a primary analyst, including features to deal with outliers.
The solution considers a probabilistic linkage and exploits the linkage weights that are estimated by a model.
Although the resulting estimator performs well in simulations , it is biased.
Lahiri and Larsen \cite{lahiri_larsen_2005} have improved the solution by Scheuren and Winkler for a one-to-one-and-onto linkage, where the matched record-pairs are fully described by a permutation matrix, that is hereafter called match matrix.
They have proposed a least-squares estimator, which requires the expected match matrix and is unbiased provided the match matrix and the vector of responses are conditionally independent given the covariates.
For a probabilistic linkage, Lahiri and Larsen have also proposed the estimation of the expected match matrix with mixture models of the record-pairs.
Chipperfield et al. \cite{chipperfield_bishop_campbell_2011} have considered the situation of a primary analyst who performs a logistic regression or the analysis of a contingency table, using the linked pairs and a sample of such pairs, which are each known to be matched or unmatched through clerical reviews.
Clerical-reviews are based on visual inspections of the sampled pairs  by qualified personnel to determine if they are matched.
The proposed solution is inspired by the maximum-likelihood framework and incorporates separate features for false positives and unlinked records.
For the false positives, the adjustment is based on the estimated probability that a link is matched given the covariates and the observed response; the probability being estimated from the clerical sample.
As for unlinked records, the adjustment is based on a reweighting of the linked records.
Dasylva\cite{dasylva_2014} has designed a calibration solution for estimation with linked data, when the linkage is based on the probabilistic methodology.
This solution uses all the potential pairs, their linkage weights to predict the match status, as well as a clerical sample.
The estimated parameters are calibrated to control totals that are based on the predicted match status.
Krewski et al. \cite{krewski_dewanji_2001}, Mallick \cite{mallick_2005} and Wang and Donnan \cite{wang_donnan_2002} have discussed cohort mortality studies with  linked data.
Mallick \cite{mallick_2005} has proposed a bias-correction solution for primary analysts.
It uses a clerical-review sample to account for the bias due to linkage errors.
Wang and Donnan \cite{wang_donnan_2002} have proposed a solution to account for the unlinked records, when the links are missing at random.
A secondary analysis of linked data is a greater challenge than a primary analysis  because of the limited information about  the linkage.
Fortunately, many solutions also exist in the literature after the landmark paper by Chambers \cite{chambers_2009a}.
In that work, Chambers has described estimating equations for linear and logistic regression with linked data, including a Least Squares Estimator (LSE), a Best Linear Unbiased Estimators (BLUEs) and an Empirical BLUE for each model (linear or logistic).
The proposed estimators are consistent when the links are Incorrect at Random (IAR), i.e. when the linkage errors and the responses are conditionally independent given the responses.
 They require the rates of false positives and false negatives in strata of record-pairs called blocks, which are used to select a reasonably small subset of pairs from the Cartesian product of two large files.
The original solution by Chambers \cite{chambers_2009a} has been extended in many directions, including the linkage of a sample to a register \cite{kim_chambers_2012a}, finite population inference \cite{chambers_2009b}, and the probabilistic linkage of three files \cite{kim_chambers_2012b}, where a first file contains the responses, while the remaining files contain the covariates.
Other solutions for the analysis of linked data have been proposed including imputation solutions by Larsen \cite{larsen_1999} and Goldstein \cite{goldstein_2015}, and bayesian solutions by Tancredi and Liseo \cite{tancredi_liseo_2015}.
 Ding and Feinberg (1994) and Di Consiglio and Tuoto (2015) have discussed linkage-errors in the context of capture-recapture models for coverage studies.

The focus of this study is on logistic regression by a primary analyst, when there is also clerical-review sample, for example to measure the rates of linkage error.
In that regard, the methodology by Chipperfield et al. \cite{chipperfield_bishop_campbell_2011}  is of special interest to exploit all the available information.
 Besides this methodology dispenses with the requirement that the linkage be Incorrect at Random (IAR), unlike Chambers \cite{ chambers_2009a}, Chambers et al. \cite{chambers_2009b},
 and Kim and Chambers \cite{kim_chambers_2012a, kim_chambers_2012b, kim_chambers_2013, kim_chambers_2016}.
However, the original solution has been modified and improved using the quasi-likelihood framework \cite{heyde}.


\section{Notation and assumptions}\label{section: notations}

Following Chipperfield et al.\cite{chipperfield_bishop_campbell_2011}, consider two files X and Y, which record the characteristics of the same finite population of individuals.
 Each individual is characterized by a vector of covariates $\mathbf{X}$ and a binary response $Y$, which are {\it both random} and related by a logistic model.
 For convenience, assume that the first component of $\mathbf{X}$ is always 1.
 For each individual the covariates are recorded in file X (as the vector $\mathbf{X}$ for the corresponding record), while the response is separately recorded in file Y (as the variable $Y$ for the related record).
 Additionally, both files contain linkage variables that are based on the recording of identifiers or quasi-identifiers associated with an individual.
 In general the linkage variables are affected by errors.
 The covariates and response are also susceptible to recording errors.
However these other errors are hereafter ignored for simplicity and to be consistent with previous work.
 Each individual is recorded at most once in each file.
 The linkage variables are used to link the two files and produce links that are labeled from 1 to $N$.
 The linkage may also produce unlinked records that are hereafter assumed to occur {\it completely at random}.
 For the $i$-th link, let $\mathbf{X}_i$ denote the covariates and $Y_i^*$ the observed response.
Let $Y_i$ denote the actual response that is associated with the covariates $\mathbf{X}_i$ through the logistic model as follows.
\begin{equation}
E \left [ \left . Y_i \right | \mathbf{X}_i = \mathbf{x}_i \right ] =
P(Y_i =1|\mathbf{X}_i = \mathbf{x}_i) := \mu (\bm{\beta};\mathbf{x}_i):= \mu_i (\bm{\beta}) = \frac{ e^{\mathbf{x}_i^{\top}\bm{\beta}} }{1+ e^{\mathbf{x}_i^{\top}\bm{\beta}}}
\end{equation}
where $\bm{\beta} = \left [ \beta_0 \ldots \beta_{p-1} \right ]^{\top}$ is the vector of unknown regression coefficients.
In a matched pair, the observed response and the actual response are identical, i.e. $Y_i^* = Y_i$.
Let $D_i$ denote the indicator variable corresponsing to the match status of link $i$, with $D_i=1$ if it is matched, $D_i=0$ else.
In general this match status is unknown but my be determined through a clerical-review when there is enough data to support such a review.
These conditions are met when linking social data with names as in some census applications described by Chipperfield et al. \cite{chipperfield_bishop_campbell_2011}.
However clerical-reviews are expensive so that it is always desirable to minimize them, e.g., by drawing a reasonably small probabilistic sample of pairs or links for review.
For the problem at hand, suppose that a Bernoulli sample $s$ of links is drawn, where each link is selected with the probability $p$ independently of other links, with $R_i$
 denoting the indicator that link $i$ is selected for clerical-review, i.e. $R_i = I(i \in s)$.
 Although this sampling design departs from the original solution, it greatly simplifies our subsequent derivations.
 Furthermore, we believe that it does not substantively modify our conclusions.
 For simplicity and in keeping with \cite{chipperfield_bishop_campbell_2011}, the clerical-reviews are assumed error-free, such that for each reviewed link $i$, the clerical decision perfectly coincides with the match status $D_i$.
Let $O_i$ denote the observed data for link $i$.
 For a sampled link $i$, $O_i$ is comprised of the covariates $\mathbf{X}_i$, the observed response $Y_i^*$, the actual response $Y_i$, the sample selection indicator $R_i$, and the match status (or clerical-decision) $D_i$.
 For links outside the clerical sample, $O_i$ is limited to the covariates $\mathbf{X}_i$, the observed response $Y_i^*$, and the sample selection indicator $R_i$.
Having a clerical-sample provides a basis for estimating the conditional probability that a link $i$ is matched given the covariates and observed response, i.e. $P \left (D_i \left | {\bf X}_i, Y_i^* \right . \right )$, without making the assumption that the links are
 Incorrect at Random (IAR), unlike Chambers \cite{ chambers_2009a}, Chambers et al. \cite{chambers_2009b}, and Kim and Chambers \cite{kim_chambers_2012a, kim_chambers_2012b, kim_chambers_2013, kim_chambers_2016}.
 This conditional match probability is crucial to the adjustment mechanism for false positives in the methodology described by Chipperfield et al. \cite{chipperfield_bishop_campbell_2011}.
 Note that the conditional match probability determines any expectation of the form $E \left [ f({\bf X}_i, Y_i^*) P \left (D_i \left | {\bf X}_i, Y_i^* \right . \right )\right ]$,
 where $f(.,.)$ is a known function of the covariates and observed response.

\noindent To understand the dimensions of the variables we defined, let us consider the following simulated set of arbitrary data.\\
\begin{tabular} {  |p{0.5cm}||p{0.5cm}|p{0.5 cm}|p{0.5cm}||p{0.5cm}|p{0.5cm}|p{0.5cm}|p{0.5cm}||p{0.5cm}|p{0.5cm}|p{0.5cm}|p{1.5cm}|}
\hline
 $\mathbf{R}$ & $\mathbf{D}$ & $\mathbf{Y}$ & $\mathbf{Y}^\ast$ & $\mathbf{X}_1$ & $\mathbf{X}_2$ & $\hdots$ & $\mathbf{X}_p$ & $\bm{\beta}_0$ & $\bm{\beta}_1$ & $\hdots$ & $\bm{\beta}_p$\\
 \hline
 $R_1$  & $D_1$  & $Y_1$ & $Y_1^\ast$ & $X_{11}$ & $X_{12}$ & $\hdots$ & $X_{1p}$ & $\beta_{10}$ & $\beta_{11}$ & $\hdots$ & $\beta_{1(p-1)}$\\
\hline
 $R_2$  & $D_2$  & $Y_2$ & $Y_2^\ast$ & $X_{21}$ & $X_{22}$ & $\hdots$ & $X_{2p}$ & $\beta_{20}$ & $\beta_{21}$ & $\hdots$ & $\beta_{2(p-1)}$\\
\hline
 $\vdots$  & $\vdots$  & $\vdots$ & $\vdots$ & $\vdots$ & $\vdots$ & $\hdots$ & $\vdots$ & $\vdots$ & $\vdots$ & $\hdots$ & $\vdots$\\
\hline
 $\vdots$  & $\vdots$  & $\vdots$ & $\vdots$ & $\vdots$ & $\vdots$ & $\hdots$ & $\vdots$ & $\vdots$ & $\vdots$ & $\hdots$ & $\vdots$\\
\hline
 $\vdots$  & $\vdots$  & $\vdots$ & $\vdots$ & $\vdots$ & $\vdots$ & $\hdots$ & $\vdots$ & $\vdots$ & $\vdots$ & $\hdots$ & $\vdots$\\
\hline
 $R_n$  & $D_n$  & $Y_n$ & $Y_n^\ast$ & $X_{n1}$ & $X_{n2}$ & $\hdots$ & $X_{np}$ & $\beta_{n0}$ & $\beta_{n1}$ & $\hdots$ & $\beta_{n(p-1)}$\\
\hline
\end{tabular}
\\
\text{\\}
\textbf{Figure 1.}{ A visualized set of discrete data $\mathbf{Z}_t = (\mathbf{X}_t, Y_t, Y_t^\ast, D_t, R_t)$ and the parameters $\bm{\beta}$}
\\

\noindent Note that $Y_t$'s are the \textit{true} responses for $\mathbf{X}_t$ and $Y_t^\ast$'s are the deemed responses after the record linkage.
 Now consider the different outcomes of the Response variable when Review variable and Match variable are involved. For some $t$,\\
\begin{tabular} { |p{2.5cm}|p{2.5cm}|p{2.8cm}|p{2.8cm}|}
\hline
 Review variable & Match variable & Response variable & Predictor variable\\
 \hline
 $R=0$  & $--$  & $Y_t$ is unknown & $X_t$\\
\hline
$R=1$ & $D=0$ & $Y_t$ is unknown & $X_t$\\
\hline
$R=1$ & $D=1$ & $Y_t = Y_t^\ast$ & $X_t$\\
\hline
\end{tabular}


\section{Checking the score identity}
In the maximum-likelihood framework, under general regularity conditions, the Fisher information is equal to the variance-covariance matrix of the score function, see \cite{barndorff} pp.25. 
This key property leads to the asymptotic efficiency of maximum likelihood estimators through the Cramer-Rao bound.
In what follows, we show that this property is not satisfied by a score function, which is derived from the solution by Chipperfield et al. \cite{chipperfield_bishop_campbell_2011}.
This means that we can refine the related estimator to decrease its variance in large samples.
In this section, we assume a known conditional match probability $P \left (D_i \left | {\bf X}_i, Y_i^* \right . \right )$ for each value of the couple $\left ( {\bf X}_i, Y_i^*\right )$.
We also assume that for each function $f(.,.)$ of the covariates and observed response, the expectation
 $E \left [ f({\bf X}_i, Y_i^*) P \left (D_i \left | {\bf X}_i, Y_i^* \right . \right )\right ]$ is known.

The solution by Chipperfield et al. may be simply described as follows.
 First observe that with known responses the regression coefficients are the solution of the following classical equation.
\begin{equation}
\sum_{i=1}^n \mathbf{X}_i \left ( Y_i - \mu_i(\bm{\beta}) \right ) = 0 \label{logiteq}
\end{equation}
 Note that the above equation produces consistent estimators of the regression coefficients because unlinked records are assumed to occur completely at random.
 When some responses are not directly observed, Chambers et al. have suggested replacing each response by its conditional expectation given the observed data $O_i$, thus obtaining the following
 equation.
\begin{equation}
\label{eq: chipperfield logistic estimating equation}
\sum_{i=1}^n \mathbf{X}_i \left ( E \left [ \left . Y_i \right | O_i \right ] - \mu_i(\bm{\beta}) \right ) = 0
\end{equation}
 where the conditional expectation is computed as follows.
\begin{eqnarray}
\label{eq: chipperfield conditional mean response}
 E \left [ \left . Y_i \right | O_i \right ] &=&
 \underbrace{I(i \in s)}_{R_i} \left ( D_i Y_i^* + \left ( 1 - D_i \right ) \mu_i  \right ) + \nonumber \\
 & & \underbrace{I(i \notin s)}_{\left ( 1-R_i \right )} \left ( P(D_i=1|\mathbf{X}_i,Y_i^*) Y_i^* +  \mu_i P(D_i=0|\mathbf{X}_i,Y_i^*) \right )
\end{eqnarray}
 Note that this computation is exact only if the following condition is met.
\begin{equation}
\label{eq: implicit condition chipperfield}
E\left [ \left . Y_i \right | \mathbf{X}_i, Y_i^*, D_i=0 \right ] = E\left [ \left . Y_i \right | \mathbf{X}_i \right ] = \mu_i
\end{equation}
 This condition is different from the IAR assumption and is not implied by it.
 However exploring the exact nature of the relationship between these two assumptions is beyond the scope of this paper.
 The fact that the conditional expectation of the response $E \left [ \left . Y_i \right | O_i \right ]$ also depends on the regression coefficients is a challenge.
 To address this practical problem, Chipperfield et al. have proposed an iterative numerical solution that mimics the Expectation-Maximization (E-M) procedure \cite{em_dempster_1977}.
 This procedure includes an E-step and an M-step in each iteration.
 In the E-step, the current value of the parameter is used to compute the required conditional expectations.
 In the subsequent M-step, the parameter estimate is updated using the estimated conditional expectations.
 For the problem at hand, let $\bm{\beta}^{(t)}$ and $E \left [ \left . Y_i \right | O_i;\bm{\beta}^{(t)} \right ]$ denote respectively the estimated coefficients and
 conditional mean response for link $i$, in iteration $t$.
 Then $\bm{\beta}^{(t+1)}$ is computed in two steps as follows.
 First, compute $E \left [ \left . Y_i \right | O_i;\bm{\beta}^{(t)} \right ]$ using the current estimate $\bm{\beta}^{(t)}$
 and Equation~(\ref{eq: chipperfield conditional mean response}).
 Second, compute $\bm{\beta}^{(t+1)}$ as the solution of the following equation.
\begin{equation}
\sum_{i=1}^n \mathbf{X}_i \left ( E \left [ \left . Y_i \right | O_i;\bm{\beta}^{(t)} \right ]
 - \mu_i \left ( \bm{\beta}^{(t+1)} \right ) \right ) = 0
\end{equation}
 The solution to the above equation is found numerically using the iterative Newton-Raphson method or some variation of this procedure.
 The resulting numerical procedure is complex because it involves nested iterations within each M-step.

We next examine the properties of Equation~(\ref{eq: chipperfield logistic estimating equation}).
 First, note that this equation is of the following form.
\begin{equation}
\mathbf{S} = \sum_{i=1}^n \widetilde{\mathbf{S}}_i = 0
\end{equation}
 where
\begin{eqnarray}
 \widetilde{\mathbf{S}}_i &:=& \mathbf{X}_i (E[Y_i|O_i]-\mu_i \left ( \bm{\beta}\right )) \\
 &=&
\label{eq: score chipperfield}
 \mathbf{X}_i \left [ I(i\in s)D_i(Y_i^* - \mu_i)+ I(i\notin s)P(D_i|\mathbf{X}_i,Y_i^*)(Y_i^* - \mu_i) \right ]
\end{eqnarray}
 Equation~(\ref{eq: score chipperfield}) is a straightforward consequence of Equation~(\ref{eq: chipperfield conditional mean response}).
 Using this equation, we obtain the following equivalent form for Equation~(eq: chipperfield logistic estimating equation).
\begin{equation}
 \sum_{i=1}^n
 \mathbf{X}_i \left [ I(i\in s)D_i(Y_i^* - \mu_i \left ( \bm{\beta} \right ) ) +
 I(i\notin s)P(D_i|\mathbf{X}_i,Y_i^*)(Y_i^* - \mu_i \left ( \bm{\beta} \right )) \right ] = 0
\end{equation}
 This latter form suggests a simpler numerical procedure using the Newton-Raphson method, with no nested iterations.
 Now suppose that $\widetilde{\mathbf{S}}_i$ is the score of some proper likelihood, i.e. $\widetilde{\mathbf{S}}_i$ is based on the first-order
 partial derivatives of the corresponding log-likelihood.
 Then under regularity conditions, it must satisfy the following identity (see \cite{barndorff} pp.25), which is hereafter called {\it score identity}.
\begin{equation}
\label{eq: score identity}
 E\left[ - \frac{\partial \widetilde{\mathbf{S}}_i}{\partial\bm{\beta}^{\top}}\right] = E\left[\widetilde{\mathbf{S}}_i \widetilde{\mathbf{S}}_i^{\top}\right] 
\end{equation}
 With little loss of generality, we hereafter assume that all necessary regularity conditions hold.
 Our next goal is checking whether the score identity is always satisfied by $\widetilde{\mathbf{S}}_i$.
To this end, the following lemma is useful.
\begin{lem}
\label{lemma1}
	\begin{eqnarray}
		E\left[D_i(Y_i^\ast-\mu_i)^2|\mathbf{X}_i\right]=P(D_i=1|\mathbf{X}_i)\mu_i(1-\mu_i) \mbox{ for all } i
          \end{eqnarray}
\end{lem}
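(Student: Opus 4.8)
The plan is to exploit the defining property of a matched link, namely that $Y_i^\ast = Y_i$ whenever $D_i = 1$, together with the fact that $Y_i$ is binary. On the event $\{D_i = 1\}$ one may replace $Y_i^\ast$ by $Y_i$, so that $D_i(Y_i^\ast - \mu_i)^2 = D_i(Y_i - \mu_i)^2$ holds identically (both sides vanish when $D_i = 0$). Since $Y_i \in \{0,1\}$, we have $(Y_i - \mu_i)^2 = Y_i(1-\mu_i)^2 + (1-Y_i)\mu_i^2$, which rewrites the left-hand side as a linear combination of $E[D_i Y_i \mid \mathbf{X}_i]$ and $E[D_i(1-Y_i)\mid \mathbf{X}_i]$.

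First I would condition on $(\mathbf{X}_i, D_i)$ and use the tower property to reduce the claim to the single identity $E[Y_i \mid \mathbf{X}_i, D_i = 1] = \mu_i$. Granting this, $E[D_i Y_i\mid\mathbf{X}_i] = \mu_i\, P(D_i=1\mid\mathbf{X}_i)$ and $E[D_i(1-Y_i)\mid\mathbf{X}_i] = (1-\mu_i)\,P(D_i=1\mid\mathbf{X}_i)$, and substituting these back collapses the expression to $P(D_i=1\mid\mathbf{X}_i)\big[(1-\mu_i)^2\mu_i + \mu_i^2(1-\mu_i)\big] = P(D_i=1\mid\mathbf{X}_i)\,\mu_i(1-\mu_i)$, which is exactly the asserted formula.

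It remains to justify $E[Y_i\mid\mathbf{X}_i, D_i = 1] = \mu_i$. This is where the implicit condition~(\ref{eq: implicit condition chipperfield}) enters: averaging that condition over $Y_i^\ast$ gives $E[Y_i\mid\mathbf{X}_i, D_i=0] = \mu_i$, and then decomposing $\mu_i = E[Y_i\mid\mathbf{X}_i]$ over the two values of $D_i$ by the law of total expectation forces $E[Y_i\mid\mathbf{X}_i, D_i=1]\,P(D_i=1\mid\mathbf{X}_i) = \mu_i\, P(D_i=1\mid\mathbf{X}_i)$, hence $E[Y_i\mid\mathbf{X}_i, D_i=1]=\mu_i$ wherever $P(D_i=1\mid\mathbf{X}_i)>0$ (on the complement the conclusion is vacuous, both sides being zero). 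I expect this last step --- pinning down the conditional mean of the true response on the matched links --- to be the only genuine obstacle; everything else is bookkeeping with indicator variables and the binary nature of $Y_i$. An alternative, if one is willing to assume outright that the match status $D_i$ is conditionally independent of the true response $Y_i$ given $\mathbf{X}_i$, is to pass directly to $E[D_iY_i\mid\mathbf{X}_i] = P(D_i=1\mid\mathbf{X}_i)\,\mu_i$ and to $E[D_i(1-Y_i)\mid\mathbf{X}_i] = P(D_i=1\mid\mathbf{X}_i)(1-\mu_i)$; I would note this as the cleanest sufficient hypothesis, while recording that condition~(\ref{eq: implicit condition chipperfield}) already suffices.
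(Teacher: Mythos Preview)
Your approach is essentially the same as the paper's: both condition on $D_i$, drop the $D_i=0$ contribution, and replace $Y_i^\ast$ by $Y_i$ on the matched links. The paper's displayed proof in fact stops at $P(D_i=1\mid\mathbf{X}_i)\,E[(Y_i^\ast-\mu_i)^2\mid\mathbf{X}_i,D_i=1]$ and never writes out the final identification with $\mu_i(1-\mu_i)$; your expansion $(Y_i-\mu_i)^2 = Y_i(1-\mu_i)^2 + (1-Y_i)\mu_i^2$ is one clean way to finish that step, the other being to invoke the Bernoulli variance directly once $E[Y_i\mid\mathbf{X}_i,D_i=1]=\mu_i$ is in hand.

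One remark on the assumption you isolate. Elsewhere in the paper (proof of Lemma~6.1) the authors justify $E[Y_i-\mu_i\mid\mathbf{X}_i,D_i=1]=E[Y_i-\mu_i\mid\mathbf{X}_i]$ by appealing directly to ``non-informative linkage,'' i.e.\ conditional independence of $D_i$ and $Y_i$ given $\mathbf{X}_i$, rather than deducing it from condition~(\ref{eq: implicit condition chipperfield}). Your derivation from~(\ref{eq: implicit condition chipperfield}) via the tower property is correct and is a nice observation that the weaker hypothesis already suffices, but be aware that the paper itself treats the conditional-independence route as primary; indeed, the paper later states that Lemma~6.1 holds even when~(\ref{eq: implicit condition chipperfield}) fails, so the authors are implicitly relying on the direct non-informativeness assumption rather than~(\ref{eq: implicit condition chipperfield}).
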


\begin{proof}
For all $i$,
\begin{eqnarray}
E\left[D_i(Y_i^\ast-\mu_i)^2|\mathbf{X}_i\right]&=& P(D_i=1|\mathbf{X}_i)E\left[D_i (Y_i^\ast-\mu_i)^2|\mathbf{X}_i, D_i=1 \right]+ \nonumber \\
                                                                                                                                                              & &  P(D_i=0|\mathbf{X}_i)E\left[D_i (Y_i^\ast-\mu_i)^2|\mathbf{X}_i,D_i=0\right] \\
&=& P(D_i=1|\mathbf{X}_i)E\left[ 1\cdot (Y_i^\ast-\mu_i)^2|\mathbf{X}_i, D_i=1 \right]+ \nonumber \\
                                                                                                                                                              & & P(D_i=0|\mathbf{X}_i)E\left[0\cdot (Y_i^\ast-\mu_i)^2|\mathbf{X}_i,D_i=0\right] \\
&=& P(D_i=1|\mathbf{X}_i)E\left[ (Y_i^\ast-\mu_i)^2|\mathbf{X}_i, D_i=1  \right]
\end{eqnarray}
\end{proof}

\noindent The next lemma shows that Equation~(\ref{eq: score identity}) is not always satisfied.
\begin{lem}\label{lemma: score identity difference}
\begin{eqnarray}
\label{eq: difference score}
E\left[ - \frac{\partial \widetilde{\mathbf{S}}_i}{\partial\bm{\beta}^{\top}}\right]- E\left[\widetilde{\mathbf{S}}_i \widetilde{\mathbf{S}}_i^{\top}\right] &=& (1-p)E\left[P(D_i=1|\mathbf{X}_i)\mu_i(1-\mu_i)\mathbf{X}_i \mathbf{X}_i^{\top}\right]- \nonumber \\
                                                                                                                   & & (1-p)E\left[P(D_i=1|\mathbf{X}_i,Y_i^\ast)^2 (Y_i^\ast-\mu_i)^2 \mathbf{X}_i \mathbf{X}_i^{\top}\right] \indent\indent
\end{eqnarray}
\end{lem}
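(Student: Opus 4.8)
The plan is to compute the two terms on the left-hand side separately, taking expectations conditionally on $(\mathbf{X}_i, Y_i^*)$ (and on the sampling indicator $R_i$) and then averaging, exploiting the fact that $R_i = I(i\in s)$ is Bernoulli$(p)$ independent of everything else. First I would differentiate $\widetilde{\mathbf{S}}_i$ as given in Equation~(\ref{eq: score chipperfield}). Since $\mu_i = \mu_i(\bm\beta) = e^{\mathbf{x}_i^\top\bm\beta}/(1+e^{\mathbf{x}_i^\top\bm\beta})$ has the standard logistic derivative $\partial\mu_i/\partial\bm\beta^\top = \mu_i(1-\mu_i)\mathbf{X}_i^\top$, and the factors $D_i$, $P(D_i|\mathbf{X}_i,Y_i^*)$, $Y_i^*$ do not depend on $\bm\beta$, we get
\begin{equation}
-\frac{\partial\widetilde{\mathbf{S}}_i}{\partial\bm\beta^\top} = \mathbf{X}_i\mathbf{X}_i^\top\,\mu_i(1-\mu_i)\left[I(i\in s)D_i + I(i\notin s)P(D_i|\mathbf{X}_i,Y_i^*)\right].
\end{equation}
Taking expectations and using $E[I(i\in s)]=p$, $E[I(i\notin s)]=1-p$ together with the tower property ($E[D_i\mathbf{X}_i\mathbf{X}_i^\top\mu_i(1-\mu_i)] = E[P(D_i=1|\mathbf{X}_i)\mu_i(1-\mu_i)\mathbf{X}_i\mathbf{X}_i^\top]$ after conditioning on $\mathbf{X}_i$, and similarly $E[P(D_i|\mathbf{X}_i,Y_i^*)\,\cdots] = E[P(D_i=1|\mathbf{X}_i,Y_i^*)\mu_i(1-\mu_i)\mathbf{X}_i\mathbf{X}_i^\top]$, which by the iterated-expectation remark in the text equals $E[P(D_i=1|\mathbf{X}_i)\mu_i(1-\mu_i)\mathbf{X}_i\mathbf{X}_i^\top]$), the first term collapses to $E[P(D_i=1|\mathbf{X}_i)\mu_i(1-\mu_i)\mathbf{X}_i\mathbf{X}_i^\top]$.

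Next I would compute $E[\widetilde{\mathbf{S}}_i\widetilde{\mathbf{S}}_i^\top]$. Squaring the bracket in Equation~(\ref{eq: score chipperfield}) and using that $I(i\in s)$ and $I(i\notin s)$ are mutually exclusive so the cross term vanishes, I get
\begin{equation}
\widetilde{\mathbf{S}}_i\widetilde{\mathbf{S}}_i^\top = \mathbf{X}_i\mathbf{X}_i^\top\left[I(i\in s)D_i^2(Y_i^*-\mu_i)^2 + I(i\notin s)P(D_i|\mathbf{X}_i,Y_i^*)^2(Y_i^*-\mu_i)^2\right].
\end{equation}
Since $D_i^2 = D_i$, the first piece has expectation $p\,E[D_i(Y_i^*-\mu_i)^2\mathbf{X}_i\mathbf{X}_i^\top] = p\,E[P(D_i=1|\mathbf{X}_i)\mu_i(1-\mu_i)\mathbf{X}_i\mathbf{X}_i^\top]$ by Lemma~\ref{lemma1} (applied after conditioning on $\mathbf{X}_i$), and the second piece has expectation $(1-p)\,E[P(D_i=1|\mathbf{X}_i,Y_i^*)^2(Y_i^*-\mu_i)^2\mathbf{X}_i\mathbf{X}_i^\top]$.

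Finally I would subtract the two results. The first term gives $\big(1 - p\big)E[P(D_i=1|\mathbf{X}_i)\mu_i(1-\mu_i)\mathbf{X}_i\mathbf{X}_i^\top]$ after the $p$-weighted pieces cancel, and the second term contributes $-(1-p)E[P(D_i=1|\mathbf{X}_i,Y_i^*)^2(Y_i^*-\mu_i)^2\mathbf{X}_i\mathbf{X}_i^\top]$, which is exactly Equation~(\ref{eq: difference score}). The only real subtlety — the ``main obstacle'' — is bookkeeping the conditioning carefully: one must condition on $\mathbf{X}_i$ (not $(\mathbf{X}_i,Y_i^*)$) before invoking Lemma~\ref{lemma1} for the $D_i(Y_i^*-\mu_i)^2$ term, while the $P(D_i|\mathbf{X}_i,Y_i^*)^2$ term is naturally handled by conditioning on $(\mathbf{X}_i,Y_i^*)$; and one must justify that $R_i$ can be pulled out of every expectation by independence of the Bernoulli sampling. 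Everything else is routine logistic-derivative and indicator algebra.
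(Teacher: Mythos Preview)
Your proposal is correct and follows essentially the same approach as the paper: compute the derivative term via the logistic identity $\partial\mu_i/\partial\bm\beta^\top=\mu_i(1-\mu_i)\mathbf{X}_i^\top$ and the tower property $E[P(D_i=1|\mathbf{X}_i,Y_i^*)\mid\mathbf{X}_i]=P(D_i=1|\mathbf{X}_i)$, compute the outer-product term by squaring (cross term vanishes, $D_i^2=D_i$) and invoking Lemma~\ref{lemma1}, then subtract. The only cosmetic difference is that the paper treats the outer product first and the derivative second, but the bookkeeping and the key conditioning steps are identical.
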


\begin{proof}
Let us consider the right-hand side:
\begin{eqnarray}
\widetilde{\mathbf{S}}_i \widetilde{\mathbf{S}}_i^{\top} &=& \left[ I(i\in s)D_i(Y_i^\ast-\mu_i)+I(i\notin s)P(D_i = 1|\mathbf{X}_i,Y_i^\ast)(Y_i^\ast-\mu_i)\right]^2 \mathbf{X}_i \mathbf{X}_i^{\top}\indent \\
&=& \left[I(i\in s)D_i(Y_i^\ast-\mu_i)^2+I(i\notin s)P(D_i=1|\mathbf{X}_i,Y_i^\ast)^2(Y_i^\ast-\mu_i)^2 \right]\mathbf{X}_i \mathbf{X}_i^{\top}\indent \indent 
\end{eqnarray}

Hence \begin{eqnarray} 
E\left[\widetilde{\mathbf{S}}_i \widetilde{\mathbf{S}}_i^{\top}\right] &=& P(i\in s)E\left[D_i(Y_i^\ast-\mu_i)^2 \mathbf{X}_i \mathbf{X}_i^{\top}\right]+ \nonumber \\ 
																							& & P(i\notin s)E\left[P(D_i=1|\mathbf{X}_i,Y_i^\ast)^2 (Y_i^\ast-\mu_i)^2 \mathbf{X}_i \mathbf{X}_i^{\top}\right]
\end{eqnarray}

where 
\begin{eqnarray} E\left[D_i(Y_i^\ast-\mu_i)^2 \mathbf{X}_i \mathbf{X}_i^{\top}\right]&=& E\left[E\left[D_i(Y_i^\ast -\mu_i)^2 \mathbf{X}_i \mathbf{X}_i^{\top}|\mathbf{X}_i\right]\right] \nonumber \\ &=& E\left[E\left[D_i(Y_i^\ast-\mu_i)^2|\mathbf{X}_i\right]\mathbf{X}_i \mathbf{X}_i^{\top}\right] \nonumber  \\ &=& E\left[P(D_i=1|\mathbf{X}_i)\mu_i(1-\mu_i)\mathbf{X}_i \mathbf{X}_i^{\top}\right] \mbox{ Lemma~(\ref{lemma1}) } \nonumber \end{eqnarray}

Let $P(i\in s) = p$. Thus \begin{eqnarray} E\left[\widetilde{\mathbf{S}}_i \widetilde{\mathbf{S}}_i^{\top}\right] &=& pE\left[P(D_i=1|\mathbf{X}_i)\mu_i(1-\mu_i)\mathbf{X}_i \mathbf{X}_i^{\top}\right]+ \nonumber \\
															     & &(1-p)E\left[P(D_i=1|\mathbf{X}_i,Y_i^\ast)^2 (Y_i^\ast-\mu_i)^2 \mathbf{X}_i \mathbf{X}_i^{\top}\right] \end{eqnarray}

\noindent Now consider the left-hand side:
\begin{eqnarray} \frac{\partial\widetilde{\mathbf{S}}_i}{\partial\bm{\beta}^{\top}}&=&  \frac{\partial}{\partial\beta^{\top}}\mathbf{X}_i^{\top}\left[I(i\in s)D_i(Y_i^\ast-\mu_i)+I(i\notin s)P(D_i|\mathbf{X}_i,Y_i^\ast)(Y_i^\ast-\mu_i)\right] \\ 
&=& \frac{\partial}{\partial\bm{\beta}^{\top}}\mathbf{X}_i^{\top}\{-I(i\in s)D_i\mu_i-I(i\notin s)P(D_i|\mathbf{X}_i,Y_i^\ast)\mu_i\} \\
&=& \mathbf{X}_i^{\top}\{-I(i\in s)D_i-I(i\notin s)P(D_i|\mathbf{X}_i,Y_i^\ast)\}\frac{\partial}{\partial\bm{\beta}^{\top}}\mu_i \\
&=&  \mathbf{X}_i^{\top}\{-I(i\in s)D_i-I(i\notin s)P(D_i|\mathbf{X}_i,Y_i^\ast)\}\mu_i(1-\mu_i)\mathbf{X}_i \\
&=& \mu_i(1-\mu_i)\mathbf{X}_i\mathbf{X}_i^{\top}\{-I(i\in s)D_i-I(i\notin s)P(D_i|\mathbf{X}_i,Y_i^\ast)\} 
\end{eqnarray}

\noindent Then
\begin{eqnarray} E\left[\frac{\partial\widetilde{\mathbf{S}}_i}{\partial\bm{\beta}^{\top}}\right] &=& -E\left[\mu_i(1-\mu_i)\mathbf{X}_i\mathbf{X}_i^{\top}\{I(i\in s)D_i+I(i\notin s)P(D_i|\mathbf{X}_i,Y_i^\ast)\}\right]\\
&=& -\{E\left[\mu_i(1-\mu_i)\mathbf{X}_i\mathbf{X}_i^{\top}I(i\in s)D_i\right] + \nonumber \\ & & E\left[\mu_i(1-\mu_i)\mathbf{X}_i\mathbf{X}_i^{\top}I(i\notin s)P(D_i|\mathbf{X}_i,Y_i^\ast)\right]\}\\
&=& -\{P(i\in s)E\left[\mu_i(1-\mu_i)\mathbf{X}_i\mathbf{X}_i^{\top}D_i\right]+ \nonumber \\ & & P(i\notin s)E\left[\mu_i(1-\mu_i)\mathbf{X}_i\mathbf{X}_i^{\top}P(D_i|\mathbf{X}_i,Y_i^\ast)\right]\}\\
&=& -\{pE\left[\mu_i(1-\mu_i)\mathbf{X}_i\mathbf{X}_i^{\top}D_i\right]+ \nonumber \\ & & (1-p)E\left[\mu_i(1-\mu_i)\mathbf{X}_i\mathbf{X}_i^{\top}P(D_i|\mathbf{X}_i,Y_i^\ast)\right]\}\\
&=& -\{pE\left[E\left[D_i\mu_i(1-\mu_i)\mathbf{X}_i \mathbf{X}_i^{\top}|\mathbf{X}_i\right]\right]+ \nonumber \\ & & (1-p)E\left[E\left[P(D_i=1|\mathbf{X}_i,Y_i^\ast)\mu_i(1-\mu_i)\mathbf{X}_i \mathbf{X}_i^{\top}|\mathbf{X}_i\right]\right]\}\\
&=& -\{pE\left[P(D_i=1|\mathbf{X}_i)\mu_i(1-\mu_i)\mathbf{X}_i \mathbf{X}_i^{\top}\right]+ \nonumber \\ & & (1-p)E\left[P(D_i=1|\mathbf{X}_i)\mu_i(1-\mu_i)\mathbf{X}_i \mathbf{X}_i^{\top}\right]\}\\
&=& -E\left[P(D_i=1|\mathbf{X}_i)\mu_i(1-\mu_i)\mathbf{X}_i \mathbf{X}_i^{\top}\right] 
\end{eqnarray}

\noindent Hence 
\begin{eqnarray}
E\left[ - \frac{\partial\widetilde{\mathbf{S}}_i}{\partial\bm{\beta}^{\top}}\right]-E\left[\widetilde{\mathbf{S}}_i \widetilde{\mathbf{S}}_i^{\top}\right] 
&=& E\left[P(D_i=1|\mathbf{X}_i)\mu_i(1-\mu_i)\mathbf{X}_i \mathbf{X}_i^{\top}\right] - \nonumber \\    & &  pE\left[P(D_i=1|\mathbf{X}_i)\mu_i(1-\mu_i)\mathbf{X}_i \mathbf{X}_i^{\top}\right] - \nonumber \\ 
                                                                                                                                                                                                        & & (1-p)E\left[P(D_i=1|\mathbf{X}_i,Y_i^\ast)^2 (Y_i^\ast-\mu_i)^2 \mathbf{X}_i \mathbf{X}_i^{\top}\right]  \indent\indent\\
&=& (1-p)E\left[P(D_i=1|\mathbf{X}_i)\mu_i(1-\mu_i)\mathbf{X}_i \mathbf{X}_i^{\top}\right]- \nonumber \\
                                                                                                                   & & (1-p)E\left[P(D_i=1|\mathbf{X}_i,Y_i^\ast)^2 (Y_i^\ast-\mu_i)^2 \mathbf{X}_i \mathbf{X}_i^{\top}\right] \label{eq1} \indent\indent
\end{eqnarray}

\end{proof}

\noindent Next apply the above lemma under the following conditions, where the right-hand side of Equation~(\ref{eq: difference score}) is obviously positive definite.\\
\indent\indent(i) $P(D_i = 1|\mathbf{X}_i) = P(D_i=1|\mathbf{X}_i, Y_i^\ast) = \lambda \in (0,1)$\\
\indent\indent(ii) $\bm{\beta} = \left [ \beta_0, 0,\ldots, 0 \right ]^{\top}$, i.e. null slopes. Thus $Y_i$ is independent of $\mathbf{X}_i$ and
\begin{equation}
\mu_i = P(Y_i =1 |X_i) = P(Y_i=1) = \frac{e^{\beta_0}}{1 + e^{\beta_0}} = \varphi \in (0,1)
\end{equation}
\indent\indent(iii) $P(Y_i^\ast = 1) = \varphi$\\
\indent\indent (iv) $E\left[(Y_i^\ast-\mu_i)^2|\mathbf{X}_i\right] =  E\left[(Y_i^\ast-\varphi)^2|\mathbf{X}_i\right]= E\left[(Y_i^\ast-\varphi)^2\right] = \varphi (1-\varphi)$\\
\indent\indent (v) $E\left[\mathbf{X}_i\mathbf{X}_i^{\top}\right]$ is positive definite\\
 Then
\begin{eqnarray}
& & (1-p)\left\{E\left[P(D_i=1|\mathbf{X}_i)\mu_i(1-\mu_i)\mathbf{X}_i \mathbf{X}_i^{\top}\right]\right\} -  \nonumber \\  & & (1-p)\left\{E\left[P(D_i=1|\mathbf{X}_i,Y_i^\ast)^2 (Y_i^\ast-\mu_i)^2 \mathbf{X}_i \mathbf{X}_i^{\top}\right]\right\} \nonumber\\
&=& (1-p) \left\{ E\left[\lambda\varphi(1-\varphi)\mathbf{X}_i\mathbf{X}_i^{\top}\right]-E\left[\lambda^2 (Y_i^\ast-\varphi)^2\mathbf{X}_i\mathbf{X}_i^{\top}\right]\right\}  \\
&=&  (1-p) \left\{ E\left[\lambda\varphi(1-\varphi)\mathbf{X}_i\mathbf{X}_i^{\top}\right]-E\left[\lambda^2 E\left[(Y_i^\ast-\varphi)^2|\mathbf{X}_i\right]\mathbf{X}_i\mathbf{X}_i^{\top}\right]\right\}  \\
&=&  (1-p) \left\{ E\left[\lambda\varphi(1-\varphi)\mathbf{X}_i\mathbf{X}_i^{\top}\right]-E\left[\lambda^2 E\left[(Y_i^\ast-\varphi)^2\right]\mathbf{X}_i\mathbf{X}_i^{\top}\right]\right\} \\
&=&  (1-p) \left\{ E\left[\lambda\varphi(1-\varphi)\mathbf{X}_i\mathbf{X}_i^{\top}\right]-E\left[\lambda^2\varphi (1-\varphi)\mathbf{X}_i\mathbf{X}_i^{\top}\right]\right\} \\
&=&  (1-p) \left\{ \lambda\varphi(1-\varphi) E\left[\mathbf{X}_i\mathbf{X}_i^{\top}\right]-\lambda^2 \varphi (1-\varphi)E\left[\mathbf{X}_i\mathbf{X}_i^{\top}\right]\right\}\\
&=&  (1-p) \varphi (1-\varphi)(\lambda-\lambda^2)E\left[\mathbf{X}_i\mathbf{X}_i^{\top}\right] \\
&>& 0 
\end{eqnarray}


\section{The quasi-likelihood framework}

Lemma~\ref{lemma: score identity difference} suggests that it is possible to improve the estimating equation proposed
 by Chipperfield et al.\cite{chipperfield_bishop_campbell_2011}.
 To this end we need to apply the quasi-likelihood framework.
 This section provides some background on this framework based on the book by Heyde \cite{heyde}.

The quasi-likelihood provides a generalization of the maximum-likelihood framework in situations where the likelihood is intractable.
 It is also a unifying framework for maximum-likelihood estimation and least-square estimation.

\noindent Let $\{\mathbf{Z}_t, t\leq n\}$ be a size-$n$ sample of independent and identically distributed (iid) observations in $\mathbb{R}^m$.
 The distribution of $\mathbf{Z}_t$ depends on a "parameter"
 $\bm{\theta} = \left [ \theta_1 \ldots  \theta_p \right ]^{\top}$ taking values in an open subset $\bm{\Uptheta}$ of $\mathbb{R}^p$.
 The goal is estimating $\bm{\theta}_0$, the unknown true value of the parameter.
 In the quasi-likelihood framework, the parameter of interest is estimated by solving an equation of the following form.
\begin{equation}
\label{eq: general EE}
 \mathbf{G} \left ( \{\mathbf{Z}_t, t\leq n\}, \widehat{\bm{\theta}} \right ) = 0
\end{equation}
 where $\mathbf{G}(.,.)$ is an {\it estimating function} mapping $\mathbb{R}^{nm}  \times \mathbb{R}^p$ into $ \mathbb{R}^p$.
 For a given problem, there may be an infinite number of such estimating functions that form a class $\mathcal{H}$ of estimating functions.
 In this class, all estimating functions may produce consistent estimators with different precisions.
 In the quasi-likelihood framework, the goal is finding the most efficient estimating function, i.e. the one yielding the asymptotically most precise estimator within the class.
%
%
Following Heyde\cite{heyde}, we consider a class $\mathcal{H}$ of estimating functions where each member $\mathbf{G} = \left [G_1 \ldots G_p \right ]^{\top}$ is a mapping
 from $\mathbb{R}^{nm}  \times \mathbb{R}^p$ into $ \mathbb{R}^p$ and satisfies the following additional conditions.
\begin{enumerate}
\item Zero mean: $E \left [ \mathbf{G} \right ]=\mathbf{0}$ at $\bm{\theta}_0$\\
\item Square-integrable: $E\left [ \mathbf{G}\mathbf{G}^{\top} \right ] < \infty$\\
\item Positive definite variance-covariance matrix $E\left [ \mathbf{G}\mathbf{G}^{\top} \right ]$ at $\bm{\theta}_0$\\
\item Differentiable: continuously differentiable in a open neighborhood of $\bm{\theta}_0$ with derivative $\partial \mathbf{G}/\partial \bm{\theta}^{\top}$\\
\item Non-singular expected derivative at $E \left [ \partial \mathbf{G}/\partial \bm{\theta}^{\top} \right ]$ at $\bm{\theta}_0$
\end{enumerate}
 When it is convenient, the expected derivative at $E \left [ \partial \mathbf{G}/\partial \bm{\theta}^{\top} \right ]$ at $\bm{\theta}_0$ is also denoted by
 $E \left [ \left . \partial \mathbf{G}/\partial \bm{\theta}^{\top} \right |_{\bm{\theta}_0}\right ]$.
%
%
The matrix derivative $\partial \mathbf{G}/\partial \bm{\theta}^{\top}$ is defined according to Schott\cite{schott}, p. 327.
\begin{eqnarray}
 \frac{\partial\mathbf{G}}{\partial\bm{\theta}^{\top}} &=&
 \left [ \left ( \frac{\partial G_1}{\partial \bm{\theta}^{\top}} \right )^{\top} \ldots \left ( \frac{\partial G_p}{\partial \bm{\theta}^{\top}} \right )^{\top} \right ]^{\top} \\
 \frac{\partial G_r}{\partial \bm{\theta}^{\top}} &=& \left [ \frac{\partial G_r}{\partial \theta_1} \ldots \frac{\partial G_r}{\partial \theta_p} \right ], \ r=1,\ldots,p
\end{eqnarray}
%
%
 Under regularity condition, the estimator $\widehat{\bm{\theta}}$ that is based on Equation~(\ref{eq: general EE}) has the following asymptotic important properties \cite[pp. 40]{heyde}.
%
\begin{enumerate}
\item Consistency: $\widehat{\bm{\theta}} \stackrel{p}{\longrightarrow} \bm{\theta}_0$ as $n \longrightarrow \infty$,
 where $\stackrel{p}{\longrightarrow}$ denotes the convergence in probability\\
\item Asymptotic normal distribution: $\label{cond2} \widehat{\bm{\theta}} \dot\sim \mathcal{N}(\bm{\theta}_0, Var(\widehat{\bm{\theta}}))$\\
\end{enumerate}
 Furthermore, the estimator variance is approximated by the {\it sandwich formula} in large samples.
\begin{equation}
\label{cond3}
 Var(\widehat{\bm{\theta}})^{-1} \circeq \epsilon(\mathbf{G})
\end{equation}
 where $\circeq$ means "approximately equal" and $\epsilon(\mathbf{G})$ is the {\it Fisher information} that is given by the following expression.
\begin{equation}
\label{eq: fisher information}
 \epsilon(\mathbf{G}) = 
 E\left [ \left . \frac{\partial\mathbf{G}}{\partial\bm{\theta}^{\top}} \right |_{\bm{\theta}_0} \right ]^{\top}
 E\left [ \mathbf{GG}^{\top} \right ]^{-1}
 E\left [ \left . \frac{\partial\mathbf{G}}{\partial\bm{\theta}^{\top}} \right |_{\bm{\theta}_0} \right ]
\end{equation}
 Note that in the above equation, the right-hand side is evaluated at $\bm{\theta}_0$.
%
%
%
\noindent To minimize $Var(\widehat{\bm{\theta}})$, we need to maximize the Fisher information $\epsilon(\mathbf{G})$.
 Following Hedye \cite[Definition 2.1, pp. 12]{heyde}, call an estimating function $\mathbf{G}^*$ {\it $O_F$-optimal} within the class $\mathcal{H}$ if
 $\epsilon (\mathbf{G^\ast}) - \epsilon (\mathbf{G})$ is nonnegative definite for every member $\mathbf{G}$ of $\mathcal{H}$.
%
%
 For a given $\mathbf{G} \in \mathcal{H}$, it is also convenient to define the standardized estimating function as follows.
%
\begin{equation} \mathbf{G}^{(s)} =
 - E\left[ \frac{\partial\mathbf{G}}{\partial\bm{\theta}^{\top}}\right]^{\top}  \left(E\left[\mathbf{G} \mathbf{G}^{\top} \right]\right)^{-1} \mathbf{G}
\end{equation}

\noindent The following key theorem gives two equivalent sufficient conditions for $O_F$-optimality within a class of estimating functions.
%
\begin{thm}
\label{imptthm}
\cite[Theorem 2.1, pp. 14]{heyde} $\mathbf{G^\ast}\in\mathcal{H}$ is an $O_F$-optimal estimating function within $\mathcal{H}$ if 
\begin{equation}
 E\left[ \mathbf{G}^{\ast (s)} \mathbf{G}^{(s)^{\top}}\right]  = E\left [ \mathbf{G}^{(s)} \mathbf{G}^{{\ast (s)}^{\top}} \right ] =
 E\left [ \mathbf{G}^{(s)} \mathbf{G}^{(s)^{\top}} \right ]
\end{equation}
%
%
or equivalently
\begin{equation}
\left(E\left[\frac{\partial\mathbf{G}}{\partial\bm{\theta}^{\top}}\right]\right)^{-1} E\left[\mathbf{G} {\mathbf{G^\ast}}^{\top}\right]
\end{equation}
 is a constant matrix $\forall\mathbf{G}\in\mathcal{H}$.

\end{thm}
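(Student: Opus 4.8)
The plan is to reduce everything to the elementary fact that a matrix of the form $E[\mathbf{v}\mathbf{v}^{\top}]$ is always nonnegative definite, applied to a ``residual'' standardized estimating function, after first identifying the Fisher information of Equation~(\ref{eq: fisher information}) with the variance of the standardized estimating function.

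First I would record the identity
\[
\epsilon(\mathbf{G}) = E\left[\mathbf{G}^{(s)}\mathbf{G}^{(s)^{\top}}\right],
\]
which is immediate on substituting the definition of $\mathbf{G}^{(s)}$ and using that $E[\mathbf{G}\mathbf{G}^{\top}]$, hence its inverse, is symmetric, so the middle factor $E[\mathbf{G}\mathbf{G}^{\top}]^{-1}E[\mathbf{G}\mathbf{G}^{\top}]E[\mathbf{G}\mathbf{G}^{\top}]^{-1}$ collapses to $E[\mathbf{G}\mathbf{G}^{\top}]^{-1}$. Because $O_F$-optimality of $\mathbf{G}^{\ast}$ means exactly that $\epsilon(\mathbf{G}^{\ast})-\epsilon(\mathbf{G})$ is nonnegative definite for every $\mathbf{G}\in\mathcal{H}$, it then suffices to expand the manifestly nonnegative-definite matrix
\[
E\left[\left(\mathbf{G}^{\ast(s)}-\mathbf{G}^{(s)}\right)\left(\mathbf{G}^{\ast(s)}-\mathbf{G}^{(s)}\right)^{\top}\right] = \epsilon(\mathbf{G}^{\ast}) - E\left[\mathbf{G}^{\ast(s)}\mathbf{G}^{(s)^{\top}}\right] - E\left[\mathbf{G}^{(s)}\mathbf{G}^{\ast(s)^{\top}}\right] + \epsilon(\mathbf{G}).
\]
Under the first stated condition both cross terms equal $E[\mathbf{G}^{(s)}\mathbf{G}^{(s)^{\top}}]=\epsilon(\mathbf{G})$, so the right-hand side telescopes to $\epsilon(\mathbf{G}^{\ast})-\epsilon(\mathbf{G})$, which is therefore nonnegative definite; that is precisely $O_F$-optimality.

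Next I would check that the second form is equivalent to the first. Writing $\dot{\mathbf{G}}:=E[\partial\mathbf{G}/\partial\bm{\theta}^{\top}]$ and $V_{\mathbf{G}}:=E[\mathbf{G}\mathbf{G}^{\top}]$, a direct computation from the definition of $\mathbf{G}^{(s)}$ gives
\[
E\left[\mathbf{G}^{(s)}\mathbf{G}^{\ast(s)^{\top}}\right] = \dot{\mathbf{G}}^{\top}V_{\mathbf{G}}^{-1}\,E\left[\mathbf{G}\mathbf{G}^{\ast\top}\right]\,V_{\mathbf{G}^{\ast}}^{-1}\dot{\mathbf{G}}^{\ast}.
\]
Setting this equal to $\epsilon(\mathbf{G})=\dot{\mathbf{G}}^{\top}V_{\mathbf{G}}^{-1}\dot{\mathbf{G}}$ and cancelling the nonsingular left factor $\dot{\mathbf{G}}^{\top}V_{\mathbf{G}}^{-1}$ (legitimate by conditions (3) and (5) defining $\mathcal{H}$) yields $E[\mathbf{G}\mathbf{G}^{\ast\top}]=\dot{\mathbf{G}}\,(\dot{\mathbf{G}}^{\ast})^{-1}V_{\mathbf{G}^{\ast}}$, i.e. $(E[\partial\mathbf{G}/\partial\bm{\theta}^{\top}])^{-1}E[\mathbf{G}\mathbf{G}^{\ast\top}]=(\dot{\mathbf{G}}^{\ast})^{-1}V_{\mathbf{G}^{\ast}}$, the same matrix for every $\mathbf{G}$. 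Every step reverses: if $(E[\partial\mathbf{G}/\partial\bm{\theta}^{\top}])^{-1}E[\mathbf{G}\mathbf{G}^{\ast\top}]$ is some constant $C$, then taking $\mathbf{G}=\mathbf{G}^{\ast}$ forces $C=(\dot{\mathbf{G}}^{\ast})^{-1}V_{\mathbf{G}^{\ast}}$, substituting back gives $E[\mathbf{G}^{(s)}\mathbf{G}^{\ast(s)^{\top}}]=\epsilon(\mathbf{G})$, and transposing (using symmetry of $\epsilon(\mathbf{G})$) supplies the remaining equality $E[\mathbf{G}^{\ast(s)}\mathbf{G}^{(s)^{\top}}]=\epsilon(\mathbf{G})=E[\mathbf{G}^{(s)}\mathbf{G}^{(s)^{\top}}]$ in the first form.

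The only real difficulty here is bookkeeping: keeping the transposes straight (the standardized function carries $\dot{\mathbf{G}}^{\top}$, not $\dot{\mathbf{G}}$), remembering that $V_{\mathbf{G}}$ and $\epsilon(\mathbf{G})$ are symmetric so that the two cross-covariances $E[\mathbf{G}^{\ast(s)}\mathbf{G}^{(s)^{\top}}]$ and $E[\mathbf{G}^{(s)}\mathbf{G}^{\ast(s)^{\top}}]$ are mutual transposes, and invoking the regularity conditions (1)--(5) on $\mathcal{H}$ to justify the inversions and cancellations. Since the statement is quoted verbatim from Heyde, one could alternatively simply defer to \cite[Theorem 2.1, pp.~14]{heyde}.
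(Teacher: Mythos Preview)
The paper does not actually supply a proof of this theorem: it is stated as a citation of \cite[Theorem~2.1, pp.~14]{heyde} and immediately applied, so there is no ``paper's own proof'' to compare against. Your sketch is correct and is essentially the standard argument one finds in Heyde---identify $\epsilon(\mathbf{G})$ with $E[\mathbf{G}^{(s)}\mathbf{G}^{(s)\top}]$, expand the nonnegative-definite matrix $E[(\mathbf{G}^{\ast(s)}-\mathbf{G}^{(s)})(\mathbf{G}^{\ast(s)}-\mathbf{G}^{(s)})^{\top}]$, and use the hypothesis to collapse the cross terms---and your derivation of the equivalence of the two sufficient conditions via cancellation of the nonsingular factor $\dot{\mathbf{G}}^{\top}V_{\mathbf{G}}^{-1}$ is also sound; as you yourself note, simply deferring to the cited reference would be entirely appropriate here.
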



\noindent We next apply the above theorem to the following general class of estimating functions.
 Let $\left ( \mathbf{X}_i, \mathbf{Y}_i \right )$ be a size-$n$ iid sample, with a distribution that depends on an unknown vector of parameters $\bm{\theta}$, such that for
 the true parameter $\bm{\theta}_0$, we have
\begin{equation}
\label{eq: conditional moment condition}
 E \left [ \left . \mathbf{h} \left ( \mathbf{X}_i, \mathbf{Y}_i, \bm{\theta}_0 \right )\right | \mathbf{X}_i = \mathbf{x}_i \right ] = 0, \ \forall \mathbf{x}_i
\end{equation}
 where  $\mathbf{h} \left (.,.,. \right )$ is a function into $\mathbb{R}^d$.
 For convenience define $ \mathbf{H}_i \left (\bm{\theta} \right ) =  \mathbf{h} \left ( \mathbf{X}_i, \mathbf{Y}_i, \bm{\theta} \right )$ that is simply denoted
 by $\mathbf{H}_i$ if it is clear from the context.
 The notation defined in previous paragraphs applies with $\mathbf{Z}_i$ comprising of $\mathbf{X}_i$ and $\mathbf{Y}_i$.
 This setup covers the situation where $\mathbf{X}_i$ is a vector of covariates and $\mathbf{Y}_i$ is a vector of responses, including when we have a single scalar response.
 However, it also covers more general situations when the vector $\mathbf{Y}_i$ includes the responses and other covariates.
 Let $\mathbf{X} = \left [ \mathbf{X}_1 \ldots \mathbf{X}_n \right ]^{\top}$ and $\mathcal{X}$ denote the design matrix and $\mathcal{X}$ the set of such matrices,
 that have $n$ rows, the appropriate number of columns and a first column of ones.
 Also for further convenience define the $q$-dimensional column vector $\mathbf{H} = \left [ \mathbf{H}_1^{\top} \ldots \mathbf{H}_n^{\top} \right ]^{\top}$ where $q=nd$.
 Note that $\mathbf{H}_i$ is {\it not} the $i$-th component of the function $\mathbf{h}(.,.,.)$.
 Equation~(\ref{eq: conditional moment condition}) and the iid nature of the observations imply that
\begin{equation}
 E \left [ \left . \mathbf{H} \right | \mathbf{X}=\mathbf{x} \right ] = \mathbf{0}, \ \forall \mathbf{x} \in \mathcal{X}
\end{equation}
 Let $\mathcal{A}$ denote the set of functions that map $\mathcal{X}$ to a $p \times q$ matrix, such that $\mathbf{AH}$ satisfies
 the square-integrability, differentiability, positive-definite variance-covariance at $\bm{\theta}_0$, and
 non-singular expected derivative at $\bm{\theta}_0$.
%
%
 We are interested in estimating functions in the following class.
\begin{equation}
\label{eq: conditional class of EFs}
\mathcal{H} = \left\{\mathbf{G}: \mathbf{G}=\mathbf{A}(\mathbf{X}) \mathbf{H}(\bm{\theta}) \ s.t. \ \mathbf{A} \in \mathcal{A} \right\}
\end{equation}
 Note that the above definitions of $\mathcal{A}$ and $\mathcal{H}$ imply that
 $E \left [ \left . \partial \mathbf{G} /\partial \bm{\theta}^{\top}\right |_{\bm{\theta}_0} \right ]$ is nonsigular for any $\mathbf{G} \in \mathcal{H}$.
 For this class of estimating functions, consider the following choice.
\begin{equation}
 \mathbf{A}^* =
 \left(E\left[ \left . \frac{\partial\mathbf{H}}{\partial\bm{\theta}^{\top}} \right | \mathbf{X} \right]\right)^{\top}
 \left(E\left[ \left . \mathbf{H} \mathbf{H}^{\top} \right | \mathbf{X} \right]\right)^{-1}
\end{equation}
 where $\mathbf{H}$ and its derivatives are evaluated at $\bm{\theta}_0$ inside the different expectations.
 This choice satisfies the second form of the sufficient condition in Theorem \ref{imptthm}. Indeed
\noindent Then
\begin{eqnarray}
E\left[ \mathbf{G} \mathbf{G}^{* \top}  \right] &=& E\left[ \left( \mathbf{AH} \right)\left(\mathbf{A}^* \mathbf{H} \right)^{\top} \right]\\
&=& E\left[ \left( \mathbf{AH} \right)\left(\mathbf{H}^{\top} \mathbf{A}^{* {\top}} \right) \right]\\
&=& E\left[ E\left[ \left . \mathbf{AH}\mathbf{H}^{\top} \mathbf{A}^{* {\top}} \right | \mathbf{X} \right] \right]\\
&=& E\left[ \mathbf{A} E\left[ \left . \mathbf{H} \mathbf{H}^{\top} \right | \mathbf{X} \right] \mathbf{A}^{* {\top}} \right] \\
&=&
 E\left[ \mathbf{A} E\left[ \left . \mathbf{H}\mathbf{H}^{\top} \right | \mathbf{X}\right]
 \left\{   E\left[ \mathbf{H}\mathbf{H}^{\top} \rvert \mathbf{X} \right]^{-1}
 E\left[ \left . \frac{\partial \mathbf{H}}{\partial\bm{\theta}^{\top}} \right | \mathbf{X} \right] \right\}   \right] \\
&=& E\left[ \mathbf{A} E\left[ \left . \frac{\partial \mathbf{H}}{\partial\bm{\theta}^{\top}} \right | \mathbf{X} \right]   \right] \\
&=& E\left[ E\left[ \left . \frac{\partial \mathbf{A} \mathbf{H}}{\partial\bm{\theta}^{\top}} \right | \mathbf{X} \right]   \right] \\
&=& E\left[\frac{\partial\mathbf{G}}{\partial\bm{\theta}^{\top}} \right]
\end{eqnarray}
 The above equation implies the sufficient condition because $E \left [ \left . \partial \mathbf{G} /\partial \bm{\theta}^{\top}\right |_{\bm{\theta}_0} \right ]$ is nonsigular
 by assumption.
%
%
 Thus, an $O_F$-optimal estimating function within $\mathcal{H}$ is as follows.
\begin{eqnarray}
 \mathbf{G}^* &=&
 E\left[ \left . \frac{\partial\mathbf{H}}{\partial\bm{\theta}^{\top}} \right | \mathbf{X} \right]^{\top}
 \left(E\left[ \left. \mathbf{H} \mathbf{H}^{\top} \right | \mathbf{X} \right]\right)^{-1} \mathbf{H} \\
 &=&
\label{eq: optimal EE conditional}
 \sum_{i=1}^{n} E\left[ \left . \frac{\partial\mathbf{H}_i}{\partial\bm{\theta}^{\top}} \right | \mathbf{X}_i \right]^{\top}
 \left(E\left[ \left . \mathbf{H}_i \mathbf{H}_i^{\top} \right | \mathbf{X}_i \right]\right)^{-1} \mathbf{H}_i
\end{eqnarray}




\text{\\}


\section{Construction of an optimal estimating equation}

\noindent In this section, we build an estimator with a smaller asymtotic variance than the one proposed by Chipperfield et al..
 This estimater is based on the quasi-likehood framework.
 To this end, we apply the results of the previous section with $\mathbf{Z}_t = \left ( \mathbf{X}_t, Y_t, Y_t^*, R_t, D_t \right ) \in \mathbb{R}^{p+4}$, for $t=1,\ldots,n$,
 where $\mathbf{X}_t, Y_t, Y_t^*$, and $D_t$ are defined as in Section~\ref{section: notations}.
 Define the function $h()$ as follows.
\begin{equation}
\label{eq: base logistic EF}
 h\left ( \mathbf{X}_t, Y_t, Y_t^*, R_t, D_t, \bm{\beta} \right ) =
\left\{R_t D_t + (1-R_t) P(D_t=1|\mathbf{X}_t, Y_t^*)\right\}\left ( Y_t^*-\mu \left ( \bm{\beta};\mathbf{X}_t \right ) \right )
\end{equation}
 Let $H_t \left ( \bm{\beta} \right )$ denote $h\left ( \mathbf{X}_t, Y_t, Y_t^*, R_t, D_t, \bm{\beta} \right )$.
 and define $\mathbf{H}$ as in the previous section.
 The following lemma shows that $E \left [ \left . H_t \right | \mathbf{X}_t \right ]=0$ hence $E \left [ \left . \mathbf{H} \right | \mathbf{X} \right ]=0$.
\begin{lem}
$E\left[\{I(i\in s)D_i + I(i\notin s)P(D_i = 1|\mathbf{X}_i, Y_i^\ast)\}(Y_i^\ast-\mu_i)|\mathbf{X}_i\right] = 0$
\begin{proof}
(1) First we want to show that $E\left[I(i\in s)D_i(Y_i^\ast=\mu_i)|\mathbf{X}_i\right]=0$ \\
\\Consider 
\begin{eqnarray}
E\left[I(i\in s)D_i(Y_i^\ast-\mu_i)|\mathbf{X}_i\right] &=& P(i\in s)E\left[D_i(Y_i^\ast-\mu_i)|\mathbf{X}_i\right]\\ 
&=& P(i\in s)E\left[E\left[D_i(Y_i^\ast - \mu_i)|\mathbf{X}_i, D_i\right]|\mathbf{X}_i\right]\\ 
&=& P(i\in s)E\left[D_iE\left[(Y_i^\ast - \mu_i)|\mathbf{X}_i, D_i\right]|\mathbf{X}_i\right] \\
&=& 0
\end{eqnarray}
since (15) is implied by the definition of non-informative linkage,\\
\begin{eqnarray}
D_i E\left[(Y_i^\ast - \mu_i)|\mathbf{X}_i, D_i\right]
&=&\left\{\begin{array}{cl} 0&\mbox{if $D_i = 0$}\\ E[Y_i^\ast - \mu_i|X_i,D_i=1]& \mbox{if $D_i = 1$}\end{array}\right. \\
&=&\left\{\begin{array}{cl} 0&\mbox{if $D_i = 0$}\\ E[Y_i - \mu_i|X_i,D_i=1]& \mbox{if $D_i = 1$}\end{array}\right. \\
&=&\left\{\begin{array}{cl} 0&\mbox{if $D_i = 0$}\\ E[Y_i - \mu_i|X_i]=0& \mbox{if $D_i = 1$}\end{array}\right. \\
&=&\left\{\begin{array}{cl} 0&\mbox{if $D_i = 0$}\\ 0& \mbox{if $D_i = 1$}\end{array} \right. 
\end{eqnarray}
\\
\indent\indent (2) Second we want to show that $$E\left[I(i\notin s)P(D_i=1|\mathbf{X}_i, Y_i^\ast)(Y_i^\ast-\mu_i)|\mathbf{X}_i\right]=0$$
\indent\indent\indent\hspace{1mm} Consider 
\begin{eqnarray}
& & P(i\notin s)E\left[P(D_i=1|\mathbf{X}_i,Y_i^\ast)(Y_i^\ast-\mu_i)|\mathbf{X}_i\right]\\
&=& P(i\notin s)E\left[P(D_i=1|\mathbf{X}_i, Y_i^\ast)(Y_i-\mu_i)|\mathbf{X}_i\right]\\
&=&P(i\notin s)E\left[E\left[D_i(Y_i^\ast-\mu_i)|\mathbf{X}_i,Y_i^\ast\right]|\mathbf{X}_i\right]\\
&=&P(i\notin s)E\left[D_i(Y_i^\ast-\mu_i)|\mathbf{X}_i\right] \indent \mbox{ by the tower property }\\
&=&0 
\end{eqnarray}
\begin{adjustwidth}{1cm}{}
Lastly, we can apply the linearity of expectation meaning that the conditional expectation of two summands is the sum of the two conditional expectations (both 1 and 2), and thus the lemma holds.
\end{adjustwidth}
\end{proof}
\end{lem}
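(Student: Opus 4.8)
The plan is to use linearity of conditional expectation to split the left-hand side into the contribution of the reviewed links ($R_i = I(i\in s) = 1$) and that of the unreviewed links ($R_i = 0$), and to show that each contribution vanishes on its own. The structural ingredients I would rely on are: (a) the Bernoulli indicator $R_i$ is drawn independently of $(\mathbf{X}_i, Y_i, Y_i^*, D_i)$, so a factor $R_i$ (resp. $1-R_i$) inside a conditional expectation may be replaced by $P(i\in s) = p$ (resp. $1-p$); (b) the tower property of conditional expectation; (c) on a matched pair one has $Y_i^* = Y_i$; and (d) the non-informative (completely-at-random) linkage assumption, which gives $E[Y_i \mid \mathbf{X}_i, D_i = 1] = E[Y_i \mid \mathbf{X}_i] = \mu_i$, the sibling for $D_i=1$ of the $D_i=0$ condition recorded in Equation~(\ref{eq: implicit condition chipperfield}). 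Everything is evaluated at the true parameter $\bm{\beta}_0$, so that $\mu_i = \mu_i(\bm{\beta}_0;\mathbf{X}_i)$ is a function of $\mathbf{X}_i$ alone.

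For the reviewed term I would pull out the factor $p$ and then bring $D_i$ into the conditioning set via the tower property: $E[R_i D_i (Y_i^* - \mu_i) \mid \mathbf{X}_i] = p\, E\bigl[ D_i\, E[\, Y_i^* - \mu_i \mid \mathbf{X}_i, D_i \,] \bigm| \mathbf{X}_i \bigr]$. The prefactor $D_i$ annihilates the event $\{D_i = 0\}$, so only $\{D_i = 1\}$ contributes; there $Y_i^* = Y_i$, and ingredient (d) gives $E[\, Y_i - \mu_i \mid \mathbf{X}_i, D_i = 1\,] = 0$. Hence the reviewed term is zero, and in particular $E[D_i(Y_i^* - \mu_i)\mid \mathbf{X}_i] = 0$.

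For the unreviewed term I would pull out $1-p$ and write $P(D_i=1\mid \mathbf{X}_i, Y_i^*) = E[D_i \mid \mathbf{X}_i, Y_i^*]$. Since $Y_i^* - \mu_i$ is measurable with respect to $(\mathbf{X}_i, Y_i^*)$ it can be moved inside the inner expectation, so $E\bigl[ P(D_i=1\mid \mathbf{X}_i, Y_i^*)(Y_i^* - \mu_i) \bigm| \mathbf{X}_i \bigr] = E\bigl[ E[\, D_i (Y_i^* - \mu_i) \mid \mathbf{X}_i, Y_i^* \,] \bigm| \mathbf{X}_i \bigr] = E[\, D_i (Y_i^* - \mu_i) \mid \mathbf{X}_i \,]$ by the tower property, and this is exactly the quantity already shown to be zero in the previous paragraph. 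Adding the two contributions yields the claim, and hence $E[\mathbf{H}\mid\mathbf{X}]=\mathbf{0}$, which is the zero-mean condition needed for the class $\mathcal{H}$.

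The only genuinely non-routine point is the $\{D_i=1\}$ case in the reviewed term, which is settled not by calculation but by invoking the modelling hypothesis that a true match carries no information about the response beyond the covariates, i.e. $E[Y_i \mid \mathbf{X}_i, D_i=1] = \mu_i$. I would be explicit that this is an assumption on the linkage mechanism rather than a consequence of the logistic model alone, and that it is formally distinct from—though of the same flavour as—the condition in Equation~(\ref{eq: implicit condition chipperfield}). Once this is granted, the rest is bookkeeping with the independence of $R_i$ and the tower property.
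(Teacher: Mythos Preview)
Your proposal is correct and follows essentially the same route as the paper: split by linearity into the reviewed and unreviewed terms, handle the reviewed term by conditioning on $D_i$ and invoking the non-informative linkage assumption $E[Y_i\mid\mathbf{X}_i,D_i=1]=\mu_i$ together with $Y_i^*=Y_i$ on matched pairs, and then reduce the unreviewed term to $E[D_i(Y_i^*-\mu_i)\mid\mathbf{X}_i]$ via the identity $P(D_i=1\mid\mathbf{X}_i,Y_i^*)=E[D_i\mid\mathbf{X}_i,Y_i^*]$ and the tower property. Your explicit flagging of the $D_i=1$ non-informativeness condition as a modelling assumption (distinct from but analogous to Equation~(\ref{eq: implicit condition chipperfield})) is exactly the point the paper also relies on.
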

 The lemma shows that $\mathbf{H}$ satisfies Equation~(\ref{eq: conditional moment condition}).
 Note that, the lemma applies even if Equation~(\ref{eq: implicit condition chipperfield}) does not hold.
 We are interested in the class of estimating functions given by Equation~(\ref{eq: conditional class of EFs}).
 The corresponding optimal estimating function is given by Equation~(\ref{eq: optimal EE conditional}), which is rewritten as follows because $H_i$ is a scalar for
 our specific problem.
\begin{eqnarray}
 \mathbf{G}^* &=&
 \sum_{i=1}^{n} E\left[ \left . \frac{\partial {H}_i}{\partial\bm{\theta}^{\top}} \right | \mathbf{X}_i \right]^{\top}
 \left(E\left[ \left . {H}_i {H}_i^{\top} \right | \mathbf{X}_i \right]\right)^{-1} {H}_i \\
 &=&
 \sum_{i=1}^{n} \mathbf{A}_i^* {H}_i
\end{eqnarray}
 where
\begin{equation}
 \mathbf{A}_i^* =
 E\left[ \left . \frac{\partial {H}_i}{\partial\bm{\theta}^{\top}} \right | \mathbf{X}_i \right]^{\top}
 \left(E\left[ \left . {H}_i {H}_i^{\top} \right | \mathbf{X}_i \right]\right)^{-1}
\end{equation}
 The next lemma computes the right-hand side of the above equation.
%
%
%
%
%
\begin{lem}
\begin{equation}
\label{eq: optimal multiplier matrix logistic RL}
\mathbf{A}_i^* =
-\mathbf{X}_i
\frac{\mu_i (1-\mu_i) P\left ( D_i=1 |\mathbf{X}_i\right )}{p \mu_i (1-\mu_i) P(D_i=1|\mathbf{X}_i)+
(1-p) E\left[\left. P(D_i=1|\mathbf{X}_i,Y_i^\ast)^2 (Y_i^\ast-\mu_i)^2 \right | \mathbf{X}_i\right]}
\end{equation}

\begin{proof}
We have
\begin{eqnarray}
& & E\left[\frac{\partial H_i}{\partial\beta}|\mathbf{X}_i\right] \nonumber \\
&=& -\{E\left[I(i\in s)D_i|\mathbf{X}_i\right]+E\left[I(i\notin s)P(D_i=1|\mathbf{X}_i,Y_i^\ast)|\mathbf{X}_i\right]\}\frac{\partial\mu_i}{\partial\beta}\\
&=& -\{p\cdot E\left[D_i|\mathbf{X}_i\right]+(1-p)E\left[E\left[D_i|\mathbf{X}_i,Y_i^\ast\right]|\mathbf{X}_i\right]\}\frac{\partial\mu_i}{\partial\beta}\\
&=& -\{p\cdot E\left[D_i|\mathbf{X}_i\right]+(1-p)E\left[D_i|\mathbf{X}_i\right]\}\frac{\partial\mu_i}{\partial\beta} \mbox { by Law of Total Expectation }\indent\\
&=& -E\left[D_i|\mathbf{X}_i\right]\frac{\partial\mu_i}{\partial\beta}\\
&=& -E\left[D_i|\mathbf{X}_i\right]\mu_i(1-\mu_i)\mathbf{\mathbf{X}_i}
\end{eqnarray}

\noindent For $Var(H_i|\mathbf{X}_i)$, first note that
\begin{equation}
Var(H_i|\mathbf{X}_i) = E\left[Var(H_i|\mathbf{X}_i,I(i\in s))|\mathbf{X}_i\right]+Var(E\left[H_i|\mathbf{X}_i,I(i\in s)\right]|\mathbf{X}_i)
\end{equation}
 Then
\begin{eqnarray} 
& & E\left[H_i|\mathbf{X}_i,I(i\in s)\right]\\
&=&  E\left[I(i\in s)D_i(Y_i^\ast-\mu_i)|\mathbf{X}_i,I(i\in s)\right]+ \nonumber \\ & & E\left[I(i\notin s)P(D_i=1|\mathbf{X}_i,Y_i^\ast)(Y_i^\ast-\mu_i)|\mathbf{X}_i,I(i\in s)\right]\\
&=& I(i\in s)E\left[D_i(Y_i^\ast-\mu_i)|\mathbf{X}_i,I(i\in s)\right]+ \nonumber \\ & & I(i\notin s)E\left[P(D_i=1|\mathbf{X}_i,Y_i^\ast)(Y_i^\ast-\mu_i)|\mathbf{X}_i,I(i\in s)\right]\\
&=& I(i\in s)E\left[D_i(Y_i^\ast-\mu_i)|\mathbf{X}_i,I(i\in s)\right]+ \nonumber \\ & & I(i\notin s)E\left[E\left[D_i (Y_i^\ast-\mu_i)|\mathbf{X}_i,Y_i^\ast\right]|\mathbf{X}_i,I(i\in s)\right] \\
&=& I(i\in s)E\left[D_i(Y_i^\ast-\mu_i)|\mathbf{X}_i\right]+I(i\notin s)E\left[D_i (Y_i^\ast-\mu_i)|\mathbf{X}_i,I(i\in s)\right]\\
&=& I(i\in s)\cdot 0+I(i\notin s)\cdot 0\\
&=& 0
\end{eqnarray}

\noindent Then 
\begin{eqnarray}
& & Var(H_i|\mathbf{X}_i) \\
&=& E\left[Var(H_i|\mathbf{X}_i,I(i\in s)|\mathbf{X}_i\right]+Var(0|\mathbf{X}_i)\mbox{ by (38) }\\
 &=& E\left[Var(H_i|\mathbf{X}_i,I(i\in s))|\mathbf{X}_i\right]\\
 &=& P(i\in s|\mathbf{X}_i)Var(H_i|\mathbf{X}_i, i\in s)+P(i\notin s|\mathbf{X}_i)(Var(H_i|\mathbf{X}_i, i\notin s) \indent\indent\\
 &=& p\cdot Var(H_i|\mathbf{X}_i, i\in s)+(1-p)Var(H_i|\mathbf{X}_i, i\notin s)
\end{eqnarray}

\noindent Now consider 
\begin{eqnarray}
& & Var(H_i|\mathbf{X}_i, i\in s) \\
&=& Var(D_i(Y_i^\ast-\mu_i)|\mathbf{X}_i, i\in s) \\ 
&=& Var(D_i(Y_i^\ast-\mu_i)|\mathbf{X}_i) \\
&=& E\left[Var(D_i(Y_i^\ast-\mu_i)|\mathbf{X}_i,D_i)|\mathbf{X}_i\right]+Var(E\left[D_i(Y_i^\ast-\mu_i)|\mathbf{X}_i,D_i\right]|\mathbf{X}_i) \indent\\
&=& E\left[Var(D_i(Y_i^\ast-\mu_i)|\mathbf{X}_i,D_i)|\mathbf{X}_i\right]+Var(0|\mathbf{X}_i) \\
&=& E\left[Var(D_i(Y_i^\ast-\mu_i)|\mathbf{X}_i,D_i)|\mathbf{X}_i\right]+ 0 \\
&=& E\left[Var(D_i(Y_i^\ast-\mu_i)|\mathbf{X}_i,D_i)|\mathbf{X}_i\right] \\
&=& E\left[Var(Y_i^\ast-\mu_i|\mathbf{X}_i,D_i=1|\mathbf{X}_i\right] \\
&=& E\left[D_i \cdot Var(Y_i-\mu_i|\mathbf{X}_i)|\mathbf{X}_i\right] \\
&=& E\left[D_i \cdot \mu_i (1-\mu_i)|\mathbf{X}_i\right] \\
&=& \mu_i (1-\mu_i)\cdot E\left[D_i|\mathbf{X}_i\right] \\
&=& \mu_i (1-\mu_i)\cdot E\left[E\left[D_i|\mathbf{X}_i,Y_i^\ast\right]|\mathbf{X}_i\right] \\
&=&\mu_i (1-\mu_i)\cdot E\left[P(D_i=1|\mathbf{X}_i,Y_i^\ast)|\mathbf{X}_i\right] \\
&=& \mu_i (1-\mu_i)\cdot P(D_i=1|\mathbf{X}_i)
\end{eqnarray}

\noindent Also consider 
\begin{eqnarray}
& & Var(H_i|\mathbf{X}_i, i\notin s) \indent \indent \indent \indent\indent \indent \indent\indent  \indent \indent  \\
&=& Var(P(D_i=1|\mathbf{X}_i,Y_i^\ast)(Y_i^\ast-\mu_i)|\mathbf{X}_i)\indent \indent \indent \indent \indent \indent \indent  \indent\indent \indent  \\ 
&=& E\left[\{P(D_i=1|\mathbf{X}_i,Y_i^\ast)(Y_i^\ast-\mu_i)\}^2 | \mathbf{X}_i\right]+ \nonumber \\ & & (E\left[P(D_i=1|\mathbf{X}_i,Y_i^\ast)(Y_i^\ast-\mu_i)|\mathbf{X}_i\right])^2 \indent \indent \indent \indent\indent \indent \indent\indent \indent\indent   \\ 
&=& E\left[\{P(D_i=1|\mathbf{X}_i,Y_i^\ast)(Y_i^\ast-\mu_i)\}^2 | \mathbf{X}_i\right]+ (0)^2\indent \indent \indent \indent \indent \indent \indent\indent\indent\indent    \\ 
&=&  E\left[P(D_i=1|\mathbf{X}_i,Y_i^\ast)^2 (Y_i^\ast-\mu_i)^2 | \mathbf{X}_i\right]\indent \indent \indent \indent \indent \indent \indent \indent \indent \indent 
\end{eqnarray}

\noindent Thus, 
\begin{eqnarray}
& & Var(H_i|\mathbf{X}_i) \nonumber \indent \indent \indent \indent \\ 
&=& p\cdot\mu_i (1-\mu_i)P(D_i=1|\mathbf{X}_i,Y_i^\ast)+ \nonumber \\ & & (1-p)\cdot E\left[P(D_i=1|\mathbf{X}_i,Y_i^\ast)^2 (Y_i^\ast-\mu_i)^2 | \mathbf{X}_i\right] \indent\indent \indent\indent \indent\indent \indent\indent \indent\indent 
\end{eqnarray}

\noindent Thus Equation~(\ref{eq: optimal multiplier matrix logistic RL}) holds.

\end{proof}
\end{lem}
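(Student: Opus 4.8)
The plan is to evaluate the two conditional moments in $\mathbf{A}_i^{*} = E\left[\left.\partial H_i/\partial\bm{\beta}^{\top}\right|\mathbf{X}_i\right]^{\top}\left(E\left[\left.H_i H_i^{\top}\right|\mathbf{X}_i\right]\right)^{-1}$ separately and then divide. Two simplifications apply at once: $H_i$ is a scalar, so the ``inverse'' is just a reciprocal; and the preceding lemma gives $E[H_i|\mathbf{X}_i]=0$, so $E[H_i H_i^{\top}|\mathbf{X}_i]=Var(H_i|\mathbf{X}_i)$. Thus the whole task reduces to computing $E[\partial H_i/\partial\bm{\beta}^{\top}|\mathbf{X}_i]$ (the numerator, up to the $-\mathbf{X}_i$ pulled out front) and $Var(H_i|\mathbf{X}_i)$ (the denominator).

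For the numerator, note that in $H_i=\{R_iD_i+(1-R_i)P(D_i=1|\mathbf{X}_i,Y_i^{\ast})\}(Y_i^{\ast}-\mu_i)$ only $\mu_i=\mu_i(\bm{\beta})$ depends on $\bm{\beta}$, with $\partial\mu_i/\partial\bm{\beta}^{\top}=\mu_i(1-\mu_i)\mathbf{X}_i^{\top}$, so $\partial H_i/\partial\bm{\beta}^{\top}=-\{R_iD_i+(1-R_i)P(D_i=1|\mathbf{X}_i,Y_i^{\ast})\}\mu_i(1-\mu_i)\mathbf{X}_i^{\top}$. I would then take $E[\,\cdot\,|\mathbf{X}_i]$ using that the Bernoulli review sample is drawn independently of the data (so $R_i$ factors out with $E[R_i]=p$) and the tower property $E\left[E[D_i|\mathbf{X}_i,Y_i^{\ast}]\mid\mathbf{X}_i\right]=P(D_i=1|\mathbf{X}_i)$; the $R_i=1$ and $R_i=0$ contributions collapse to the same thing, giving $E[\partial H_i/\partial\bm{\beta}^{\top}|\mathbf{X}_i]^{\top}=-P(D_i=1|\mathbf{X}_i)\mu_i(1-\mu_i)\mathbf{X}_i$.

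For the denominator I would apply the law of total variance, conditioning on the selection indicator $R_i=I(i\in s)$. The cross (variance-of-conditional-mean) term vanishes because $E[H_i|\mathbf{X}_i,R_i]=0$: on $\{i\in s\}$ this is $E[D_i(Y_i^{\ast}-\mu_i)|\mathbf{X}_i]=0$ by the non-informative-linkage identity~(\ref{eq: implicit condition chipperfield}), and on $\{i\notin s\}$ it is $E[P(D_i=1|\mathbf{X}_i,Y_i^{\ast})(Y_i^{\ast}-\mu_i)|\mathbf{X}_i]=E[D_i(Y_i^{\ast}-\mu_i)|\mathbf{X}_i]=0$ after one further tower step. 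Hence $Var(H_i|\mathbf{X}_i)=p\,Var(H_i|\mathbf{X}_i,i\in s)+(1-p)\,Var(H_i|\mathbf{X}_i,i\notin s)$. On $\{i\in s\}$, $H_i=D_i(Y_i^{\ast}-\mu_i)$ has conditional mean $0$, so $Var(H_i|\mathbf{X}_i,i\in s)=E[D_i(Y_i^{\ast}-\mu_i)^2|\mathbf{X}_i]=P(D_i=1|\mathbf{X}_i)\mu_i(1-\mu_i)$ by Lemma~\ref{lemma1}; on $\{i\notin s\}$, $H_i=P(D_i=1|\mathbf{X}_i,Y_i^{\ast})(Y_i^{\ast}-\mu_i)$ also has conditional mean $0$, so $Var(H_i|\mathbf{X}_i,i\notin s)=E[P(D_i=1|\mathbf{X}_i,Y_i^{\ast})^2(Y_i^{\ast}-\mu_i)^2|\mathbf{X}_i]$. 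Adding the two pieces yields the denominator in~(\ref{eq: optimal multiplier matrix logistic RL}), and dividing the numerator by it gives the claimed expression for $\mathbf{A}_i^{*}$.

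I expect the only real obstacle to be bookkeeping rather than anything deep: one has to be careful that the conditional-mean term in the law of total variance genuinely vanishes — which is exactly where the earlier lemma and condition~(\ref{eq: implicit condition chipperfield}) are used — and that the hypotheses of Lemma~\ref{lemma1} are in force when it is invoked for the $i\in s$ piece. Once those points are handled, the remainder is routine manipulation with iterated expectations.
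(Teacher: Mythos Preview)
Your proposal is correct and follows essentially the same route as the paper: compute $E[\partial H_i/\partial\bm{\beta}^{\top}\mid\mathbf{X}_i]$ by differentiating only through $\mu_i$ and collapsing the $R_i=1$ and $R_i=0$ pieces via the tower property, then compute $Var(H_i\mid\mathbf{X}_i)$ by the law of total variance conditioning on $R_i$, showing $E[H_i\mid\mathbf{X}_i,R_i]=0$ so that only the two within-group variances remain. The only cosmetic difference is that for the $i\in s$ piece the paper applies a second law-of-total-variance (conditioning further on $D_i$), whereas you go directly to $E[D_i(Y_i^\ast-\mu_i)^2\mid\mathbf{X}_i]$ and invoke Lemma~\ref{lemma1}; your shortcut is cleaner and entirely equivalent. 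One small caveat: the vanishing of $E[D_i(Y_i^\ast-\mu_i)\mid\mathbf{X}_i]$ in the paper is attributed to a ``non-informative linkage'' assumption (that $E[Y_i\mid\mathbf{X}_i,D_i=1]=\mu_i$), not to condition~(\ref{eq: implicit condition chipperfield}) itself, so you may want to adjust that citation.
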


\text{\\}


\section{Proposed estimation procedure}

\noindent We aim to estimate $\bm{\theta}_0$ by the solution $\widetilde{\bm{\theta}}$ of the following estimating equation
\begin{equation}
 \sum_{i=1}^{n} \mathbf{A}_i^* {H}_i = 0
\end{equation}
 where $H_i$ is given by Equation~(\ref{eq: base logistic EF}) and
 $\mathbf{A}_i^*$ is given by Equation~(\ref{eq: optimal multiplier matrix logistic RL}).
 There are two obstacles to this implementation.
 First the optimal multiplier matrix $\mathbf{A}_i^*$ depends on $\bm{\theta}_0$ and on the conditional match probability $P(D_i=1|\mathbf{X}_i,Y_i^\ast)$,
 which must be estimated from the clerical sample.
 Second $H_i$ also depends on the conditional match probability $P(D_i=1|\mathbf{X}_i,Y_i^\ast)$.
 A solution is to use a two-step estimator as follows.
 In a preliminary step, estimate $P(D_i=1|\mathbf{X}_i,Y_i^\ast)$ from the clerical-sample, ideally for each observable $(\mathbf{X}_i,Y_i^\ast)$ pair.
 The following ratio estimator may be used.
\begin{equation}
\widehat{P}(D_i=1|\mathbf{X}_i,Y_i^\ast) = \frac{\#\{D_i=1|\mathbf{X}_i, Y_i^\ast\}}{\#\{D_i=1|\mathbf{X}_i, Y_i^\ast\} + \#\{D_i=0|\mathbf{X}_i, Y_i^\ast\}}
\end{equation}
 Next, use this estimate in the estimating equation by Chipperfield et al. to obtain a first-step estimate $\widetilde{\bm{\theta}}^{(1)}$.
 Also estimate $E\left[P(D_i=1|\mathbf{X}_i,Y_i^\ast)^2 (Y_i^\ast-\mu_i)^2 | \mathbf{X}_i\right]$ using the first-step estimate and the estimated
 conditional match probability.
 For example the following ratio estimator may be used, where $\widehat{\mu}_i$ is based on the first-step estimator.
\begin{equation}
\widehat{E}\left[P(D_i=1|\mathbf{X}_i,Y_i^\ast)^2 (Y_i^\ast-\mu_i)^2|\mathbf{X}_i\right] =
\frac{\sum_{j=1}^n I(\mathbf{X}_j=\mathbf{X}_i)\{P(D_j=1|\mathbf{X}_j,Y_j^\ast)^2(Y_j^\ast-\widehat{\mu}_j)^2\}}{\sum_{j=1}^nI(\mathbf{X}_j=\mathbf{X}_i)}
\end{equation}
 Next, use the first step estimate to compute an estimate $\widehat{\mathbf{A}}_i^*$ and compute the second step estimate $\widetilde{\bm{\theta}}^{(2)}$
 as the solution of the following estimating equation.
\begin{equation}
 \sum_{i=1}^{n} \widehat{\mathbf{A}}_i^* {H}_i = 0
\end{equation}


\section{Conclusion and future work}
In future work, we will implement the proposed estimator and compare it to the original estimator by Chipperfield et al. \cite{chipperfield_bishop_campbell_2011},
 to assess the gain in efficiency.

\section*{Acknowledgments}  It is a pleasure to thank Professor Michael Evans for his valuable advice.

\end{document}